\newtheorem{theorem}{Theorem}
\newtheorem{proposition}[theorem]{Proposition}
\newtheorem{lemma}[theorem]{Lemma}
\newcommand{\be}{\begin{equation}}
\newcommand{\ee}{\end{equation}}
\newcommand{\bea}{\begin{eqnarray}}
\newcommand{\eea}{\end{eqnarray}}
\newcommand{\ba}{\begin{array}}
\newcommand{\ea}{\end{array}}
\newcommand{\bean}{\begin{eqnarray*}}
\newcommand{\eean}{\end{eqnarray*}}
\newcommand{\pa}{\partial}
\begin{document}
\title{the squared eigenfunction symmetries for the BTL and CTL Hierarchies}
\author{Jipeng Cheng$^{1}$, Jingsong He$^{2*}$
 }
\dedicatory { 1\ Department of Mathematics, China University of
Mining and Technology, Xuzhou, Jiangsu 221116 , P.\ R.\ China\\
2\ Department of Mathematics, Ningbo University,
Ningbo, Zhejiang 315211, P.\ R.\ China \\
}

\thanks{$^*$Corresponding author. email:hejingsong@nbu.edu.cn; jshe@ustc.edu.cn.}
\begin{abstract}
In this paper, the squared eigenfunction symmetries for the BTL and
CTL hierarchies are explicitly constructed with the suitable
modification of the ones for the TL hierarchy, by considering the
BTL and CTL constraints. Also the connections with the corresponding
additional symmetries are investigated: the squared eigenfunction
symmetry generated by the wave function can be viewed as the
generating function for the additional symmetries.

\textbf{PACS numbers}: 02.30.Ik

 \textbf{Keywords}: squared eigenfunction symmetry, the
BTL and CTL hierarchies, additional symmetries.

\end{abstract}
\maketitle
\section{Introduction}
The Toda lattice (TL) equation\cite{toda}, as an important
integrable system, describes the motion of one-dimensional particles
with exponential interaction of neighbors, which plays significant
role in physics. The TL hierarchy, which is one of the most
important integrable hierarchies, was first introduced by Ueno and
Takasaki in \cite{uenotaksasai} to generalize the Toda lattice
equations\cite{toda} along the work of the KP hierarchy\cite{DJKM}.
In \cite{uenotaksasai}, the analogues of the B and C types for the
TL hierarchy, i.e. the BTL and CTL hierarchies, are also considered,
which are corresponding to infinite dimensional Lie algebras
$\textmd{o}(\infty)$ and $\textmd{sp}(\infty)$ respectively. The BTL
and CTL hierarchies are also very important in the integrable system
just like the TL hierarchy\cite{uenotaksasai}. However, there are
few researches on the BTL and CTL hierarchies in literature. So much
work can be done for the BTL and CTL hierarchies.

The squared eigenfunction
symmetry\cite{oevel1993,oevel1994,1OC98,2OC98}, also called ``ghost"
symmetry\cite{aratyn1998}, is a kind of symmetry generated by
eigenfunctions and adjoint eigenfunctions in the integrable system.
The squared eigenfunction symmetry has many applications in the
integrable system. For example, 1) symmetry
constraint\cite{oevel1994,2OC98,Cheng91,SS91,KS92,Cheng92,LorisWillox99,Tu2011}
can be defined by identifying the squared eigenfunction symmetry
with the usual flow of the integrable hierarchy; 2) the connection
with the additional
symmetry\cite{aratyn1998,Cheng10,Li12,cheng2012}, which is the
symmetry depending explicitly on the space and time
variables\cite{Fokas1981,Chen1983,OS86,ASM95,D95,takasaki1996,Tu07,he2007,tian2011};
3) the extended integrable systems\cite{1Zeng08, 2Zeng08}, which
contain the integrable equations with self-consistent sources, can
be constructed with the help of the squared eigenfunction symmetry.
Recently, the squared eigenfunction symmetries for the BKP hierarchy
and the discrete KP hierarchy are systematically developed in
\cite{Cheng10} and \cite{Li12} respectively. Also the squared
eigenfunction symmetry for the TL hierarchy and its connection with
the additional symmetry are investigated in \cite{cheng2012}. In
this paper, we will concentrate on the construction of the squared
eigenfunction symmetry of the BTL and CTL hierarchies.

The squared eigenfunction symmetry of the Toda lattice hierarchy
\cite{cheng2012} is given in the form of the Kronecker product of
the vector eigenfunctions and the vector adjoint eigenfunctions.
Because of the BTL and CTL constraints, the squared eigenfunction
symmetry can not be defined directly from the one of the TL
hierarchy and some modification must be needed. For this, we
construct the squared eigenfunction symmetries of the BTL and CTL
hierarchies by the suitable combination of the ones for the TL
hierarchy. Then the connection with the additional symmetry is
investigated: the particular squared eigenfunction symmetries
generated by the wave functions can be viewed as the generating
functions of the additional symmetries.

This paper is organized in the following way. In Section 2, we
recall some basic knowledge about the BTL and CTL hierarchies. Then,
we construct the squared eigenfunction symmetry for the BTL
hierarchy in Section 3. Next, in Section 4 the squared eigenfunction
symmetry for the CTL hierarchy is also investigated. At last, we
devote Section 5 to some conclusions and discussions.

\section{the BTL and CTL hierarchies}
In this section, some basic facts about the BTL and CTL hierarchies
are reviewed. One can refer to \cite{uenotaksasai} for
 more details about the BTL and CTL hierarchies.

Firstly, consider the algebra
$$ \mathscr{D}=\{(P_1,P_2)\in\textmd{ gl}((\infty))\times \textmd{gl}((\infty))\ | \ (P_1)_{ij}=0 \ \textmd{ for}\  j-i\gg0, \ (P_2)_{ij}=0 \ \textmd{for} \  i-j\gg0\},$$
which has the following splitting:
\begin{eqnarray*}
\mathscr{D}&=&\mathscr{D}_+ +\mathscr{D}_-,\\
\mathscr{D}_+&=&\{(P,P)\in \mathscr{D}\ | \ (P)_{ij}=0 \ \textmd{for} \  |i-j|\gg 0\}=\{(P_1,P_2)\in \mathscr{D}\ | \ P_1=P_2\},\\
\mathscr{D}_-&=&\{(P_1,P_2)\in \mathscr{D}\ | \ (P_1)_{ij}=0\ \textmd{for}\ j\geq i, \ (P_2)_{ij}=0 \ \textmd{for} \  i> j\},
\end{eqnarray*}
with $(P_1,P_2)=(P_1,P_2)_+ +(P_1,P_2)_-$ given by
$$(P_1,P_2)_+=(P_{1u}+P_{2l},P_{1u}+P_{2l}),\ (P_1,P_2)_-=(P_{1l}-P_{2l},P_{2u}-P_{1u}),$$
where for a matrix $P$, $P_u$ and $P_l$ denote the upper (including diagonal) and strictly lower triangular parts of $P$, respectively. For $(P_1,P_2),(Q_1,Q_2)\in  \mathscr{D}$, we define $$(P_1,P_2)(Q_1,Q_2)=(P_1Q_1,P_2Q_2),\quad (P_1,P_2)^{-1}=(P_1^{-1},P_2^{-1}).$$

Then the BTL (or CTL) hierarchy is defined in the Lax forms as
\begin{equation}\label{bctlhierarchy}
    \pa_{x_{2n+1}}L=[(L^{2n+1}_1,0)_+,L]\ \ \ \textmd{and}\ \ \ \pa_{y_{2n+1}}L=[(0,L_2^{2n+1})_+,L],\ \ \ n=0,1,2,\cdots
\end{equation}
where the Lax operator $L$ is given by  a pair of infinite matrices
\begin{equation}\label{laxoperator}
    L=(L_1,L_2)=\Big(\sum _{-\infty<i\leq 1}\textmd{diag}[a_i^{(1)}(s)]\Lambda^i,\sum _{-1\leq i<\infty}\textmd{diag}[a_i^{(2)}(s)]\Lambda^i\Big)\in  \mathscr{D}
\end{equation}
with $\Lambda=(\delta_{j-i,1})_{i,j\in \mathbb{Z}}$, and
$a_i^{(k)}(s)$ and  $a_i^{(k)}(s)$ depending on
$x=(x_1,x_3,x_5,\cdots)$ and $y=(y_1,y_3,y_5,\cdots)$, such that
$$a_1^{(1)}(s)=1 \ \ \ \text{and}\ \ \ a_{-1}^{(2)}(s)\neq 0\ \ \ \forall s$$
and satisfies the BTL (or CTL) constraint\cite{uenotaksasai}
\begin{equation}\label{bctlconstr}
    L^T=-(J,J)L(J^{-1},J^{-1})\ \left(\textmd{or}\ L^T=-(K,K)L(K^{-1},K^{-1})\right),
\end{equation}
where $J=((-1)^i\delta_{i+j,0})_{i,j\in\mathbb{Z}}$, $K=\Lambda J$
and $T$ refers to the matrix transpose. The BTL (or CTL) constraint
on the components of the Lax operators $L_1$ and $L_2$ is explicitly
showed as
\begin{equation}\label{bctlconstrcomponent}
    a_i^{(k)}(s)=(-1)^{i+1}a_i^{(k)}(-s-i)\ \left(\textmd{or}\
    a_i^{(k)}(s)=(-1)^{i+1}a_i^{(k)}(-s-i-1)\right), k=1,2.
\end{equation}

The Lax equation for the BTL (or CTL) hierarchy can be expressed as
a system of equations of the Zakharov-Shabat type:
\begin{eqnarray}
&&\pa_{x_{2n+1}}(L^{2m+1}_1)_u-\pa_{x_{2m+1}}(L^{2n+1}_1)_u+[(L^{2m+1}_1)_u,(L^{2n+1}_1)_u]=0,\label{zs1}\\
&&\pa_{y_{2n+1}}(L^{2m+1}_2)_l-\pa_{y_{2m+1}}(L^{2n+1}_2)_l+[(L^{2m+1}_2)_l,(L^{2n+1}_2)_l]=0,\label{zs2}\\
&&\pa_{y_{2n+1}}(L^{2m+1}_1)_u-\pa_{x_{2m+1}}(L^{2n+1}_2)_l+[(L^{2m+1}_1)_u,(L^{2n+1}_2)_l]=0,\
\ \ m,n=0,1,2,\cdots\label{zs3}
\end{eqnarray}
When $m=n=0$, one can from (\ref{zs3}) get the BTL equation
\begin{eqnarray}
&&\pa_{x_1}a_{-1}^{(2)}(1)=a_{-1}^{(2)}(1)a_{0}^{(1)}(1),\ \
\pa_{x_1}a_{-1}^{(2)}(s)=a_{-1}^{(2)}(s)(a_{0}^{(1)}(s)-a_{0}^{(1)}(s-1))\
(s\geq
2),\nonumber\\
&&\pa_{y_1}a_{0}^{(1)}(s)=a_{-1}^{(2)}(s)-a_{-1}^{(2)}(s+1)\
(s\geq1),\label{btlequation}
\end{eqnarray}
and the CTL equation
\begin{eqnarray}
&&\pa_{x_1}a_{-1}^{(2)}(0)=2a_{-1}^{(2)}(0)a_{0}^{(1)}(0),\ \
\pa_{x_1}a_{-1}^{(2)}(s)=a_{-1}^{(2)}(s)(a_{0}^{(1)}(s)-a_{0}^{(1)}(s-1))\
(s\geq
1),\nonumber\\
&&\pa_{y_1}a_{0}^{(1)}(s)=a_{-1}^{(2)}(s)-a_{-1}^{(2)}(s+1)\
(s\geq0),\label{ctlequation}
\end{eqnarray}
by considering the corresponding constraint
(\ref{bctlconstrcomponent}).

The Lax operator of the BTL (or CTL) hierarchy (\ref{bctlhierarchy})
has the representation
\begin{equation}\label{laxwaveexpression}
    L=W(\Lambda,\Lambda^{-1})W^{-1}=S(\Lambda,\Lambda^{-1})S^{-1}
\end{equation}
in terms of two pairs of wave operators $W=(W_1,W_2)$ and $S=(S_1,S_2)$, where
\begin{eqnarray}
S_1(x,y)=\sum_{i\geq 0} \textmd{diag}[c_i(s;x,y)] \Lambda^{-i},&&S_2(x,y)=\sum_{i\geq 0} \textmd{diag}[c_i'(s;x,y)] \Lambda^{i} \label{smatrices}
\end{eqnarray}
and
\begin{equation}\label{wmatrices}
    W_1(x,y)=S_1(x,y)e^{\xi(x,\Lambda)},\quad W_2(x,y)=S_2(x,y)e^{\xi(y,\Lambda^{-1})}
\end{equation}
with $c_0(s;x,y)=1$ and $c_o'(s;x,y)\neq 0$ for any $s$, and $\xi(x,\Lambda^{\pm1})=\sum_{n\geq0}x_{2n+1}\Lambda^{\pm 2n+1}$. Obviously,
$W=(W_1,W_2)$ are not uniquely determined, but have the arbitrariness
$$W_1(x,y)\mapsto W_1(x,y)f^{1}(\Lambda),\quad W_2(x,y)\mapsto W_2(x,y)f^{2}(\Lambda).$$
Here $f^{1}(\lambda)=\sum_{i\geq0}f_i^1\lambda^{-i}$ and $f^{2}(\lambda)=\sum_{i\geq0}f_i^2\lambda^{i}$ ($f_0^1=1$, $f_0^2\neq0$)
are formal Laurent series with constant scalar coefficients. Under an appropriate choice of $f_i(\lambda)$, $W=(W_1,W_2)$ satisfies
\begin{equation}\label{bcwconstraints}
    J^{-1}W_i^TJ=W_i^{-1}\ \textmd{for}\ \textmd{BTL}\ (\ \textmd{or}\ K^{-1}W_i^TK=W_i^{-1}\ \textmd{for}\ \textmd{CTL}), i=1,2.
\end{equation}
The wave operators evolve according to
\begin{eqnarray}
\pa_{x_{2n+1}}S&=&-(L_1^{2n+1},0)_-S,\quad \pa_{y_{2n+1}}S=-(0,L_2^{2n+1})_-S,\label{sevolution}\\
\pa_{x_{2n+1}}W&=&(L_1^{2n+1},0)_+W,\quad \pa_{y_{2n+1}}W=(0,L_2^{2n+1})_+W.\label{wevolution}
\end{eqnarray}

The vector wave functions $\Psi=(\Psi_1,\Psi_2)$ and the adjoint wave functions $\Psi^*=(\Psi_1^*,\Psi_2^*)$, can also be introduced as
\begin{eqnarray}
  \Psi_i(x,y;\lambda)&=& (\Psi_i(n;x,y;\lambda))_{n\in\mathbb{Z}}:=W_i(x,y)\chi(\lambda),\label{wavefunction} \\
  \Psi_i^*(x,y;\lambda)&=&(\Psi_i^*(n;x,y;\lambda))_{n\in\mathbb{Z}}:=(W_i(x,y)^{-1})^T\chi^*(\lambda),\label{adjointwavefunction}
\end{eqnarray}
with $\chi(\lambda)=(\lambda^i)_{i\in\mathbb{Z}}$ and
$\chi^*(\lambda)=\chi(\lambda^{-1})$, which satisfy the following
relations:
\begin{eqnarray}
L\Psi=(z,z^{-1})\Psi,&& L^T\Psi^*=(z,z^{-1})\Psi^*\label{laxactonwave}\\
\pa_{x_{2n+1}}\Psi=(L_1^{2n+1},0)_+\Psi,&& \pa_{y_{2n+1}}\Psi=(0,L_2^{2n+1})_+\Psi,\label{vectorwavefunctionequation}\\
\pa_{x_{2n+1}}\Psi^*=-(L_1^{2n+1},0)_+^T\Psi^*,&&
\pa_{y_{2n+1}}\Psi^*=-(0,L_2^{2n+1})_+^T\Psi^*.\label{vectoradjointwavefunctionequation}
\end{eqnarray}
From the BTL (or CTL) constraint (\ref{bcwconstraints}) on the wave
operators, the adjoint wave function is connected with the wave
function in the following way,
\begin{equation}\label{relationadjoint}
    \Psi_i^*(x,y,\lambda)=J\Psi_i(x,y,-\lambda)\ \ \ (\text{or}\ \Psi_i^*(x,y,\lambda)=\lambda
    K\Psi_i(x,y,-\lambda)).
\end{equation}

If vector functions $q=(q(n;x,y))_{n\in\mathbb{Z}}$ and
$r=(r(n;x,y))_{n\in\mathbb{Z}}$ satisfy
\begin{eqnarray}
\pa_{x_{2n+1}}q=(L_1^{2n+1})_uq,&&\pa_{y_{2n+1}}q=(L_2^{2n+1})_lq,\nonumber\\
\pa_{x_{2n+1}}r=-(L_1^{2n+1})^T_uq,&&\pa_{y_{2n+1}}r=-(L_2^{2n+1})^T_lr,\label{qrdef}
\end{eqnarray}
we call them \textbf{vector eigenfunction} and \textbf{vector
adjoint eigenfunction} for the BTL (or CTL) hierarchy respectively.
Obviously, the wave functions $\Psi_1$ and $\Psi_2$ are
eigenfunctions, and the adjoint wave functions $\Psi_1^*$ and
$\Psi_2^*$ are the adjoint eigenfunctions. From the BTL ( or CTL)
constraint (\ref{bctlconstr}), one can know that
\begin{eqnarray}
(L_1^{2n+1})_u^T=-J(L_1^{2n+1})_uJ^{-1},&&(L_2^{2n+1})_l^T=-J(L_2^{2n+1})_lJ^{-1}\nonumber\\
\Big(\text{or}\
(L_1^{2n+1})_u^T=-K(L_1^{2n+1})_uK^{-1},&&(L_2^{2n+1})_l^T=-K(L_2^{2n+1})_lK^{-1}\Big)\label{lulconstr}
\end{eqnarray}
Thus given the vector eigenfunction $q$, $Jq$ (or $Kq$ ) will be the
adjoint eigenfunction for the BTL (or CTL) hierarchy. This fact
connected with (\ref{bctlconstr}) and (\ref{relationadjoint}) shows
that in the BTL (or CTL) hierarchy, the adjoint case can be derived
directly from the usual case. Therefore, we can only consider the
usual case in the study of the BTL (or CTL) hierarchy.

 At last, we end this section with the introduction of the additional symmetries of the BTL and CTL
hierarchies. The Orlov-Shulman operator\cite{OS86,ASM95} is defined
as
\begin{equation}\label{osoperator}
    M\equiv(M_1,M_2)=W(\varepsilon,\varepsilon^*)W^{-1},
\end{equation}
where
$$\varepsilon=\rm{diag}[s]\Lambda^{-1},\quad \varepsilon^*=-\varepsilon^T
+\Lambda,$$
 satisfying
\begin{eqnarray}
M\Psi=(\pa_z,\pa_{z^{-1}})\Psi,&&[L,M]=(1,1), \nonumber\\
\pa_{x_{2n+1}}M=[(L^{2n+1}_1,0)_+,M],&&
\pa_{y_{2n+1}}M=[(0,L_2^{2n+1})_+,M]. \label{osproperty}
\end{eqnarray}
The additional symmetry \cite{cheng2011} can be defined by
introducing the additional independent variables $x_{m,l}^*$ and
$y_{m,l}^*$,
\begin{equation}\label{bctladdsym}
    \pa_{x_{m,l}^*}W=-(A_{1ml}(M_1,L_1),0)_-W,\quad \pa_{y_{m,l}^*}W=-(0,A_{2ml}(M_2,L_2))_-W,
\end{equation}
where $A_{iml}(M_i,L_i)$ are polynomials in $L_i$ and $M_i$. Denote
$A_{ml}({M},{L})=(A_{1ml}(M_1,L_1),A_{2ml}(M_2,L_2))$, then
\begin{itemize}
  \item in BTL case,
\begin{equation}\label{abtladdsymm}
A_{ml}({M},{L})={M}^m{L}^l-(-1)^l{L}^{l-1}{M}^m{L};
\end{equation}
  \item in CTL case,
\begin{equation}\label{actladdsymm}
A_{ml}({M},{L})={M}^m{L}^l-(-1)^l{L}^{l}{M}^m.
\end{equation}
\end{itemize}

\section{The Squared Eigenfunction Symmetry for the BTL Hierarchies}
In this section, we shall construct the squared eigenfunction
symmetry for the BTL hierarchy.

Given a couple of vector eigenfunctions $q_1$ and $q_2$, \textbf{the
squared eigenfunction flow} of the BTL hierarchy can be defined by
its actions on the wave operators,
\begin{equation}\label{bgsymw}
\pa_\alpha W_1=(q_1\otimes J q_2-q_2\otimes J q_1)_l W_1,\quad
\pa_\alpha W_2=-(q_1\otimes J q_2-q_2\otimes J q_1)_u W_2,
\end{equation}
where $(A\otimes B)_{ij}=A_iB_j$ for the vectors $A$ and $B$.

 According to
(\ref{laxwaveexpression}), one can further have the squared
eigenfunction flow on the Lax operator
\begin{equation}\label{bgsymlax}
\pa_\alpha L_1=[(q_1\otimes J q_2-q_2\otimes J q_1)_l,L_1],\quad
\pa_\alpha L_2=-[(q_1\otimes J q_2-q_2\otimes J q_1)_u, L_2].
\end{equation}
Next we will show that the definitions above is well-defined:
(\ref{bgsymw}) or (\ref{bgsymlax}) is consistent with the BTL
constraint (\ref{bctlconstr}).
\begin{proposition}
$\pa_\alpha$ is consistent with the BTL constraint
(\ref{bctlconstr}), i.e. $(\pa_\alpha L_i^T)J+J(\pa_\alpha
L_i)=0,i=1,2.$
\end{proposition}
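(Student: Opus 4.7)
The plan is to reduce the claim to an algebraic identity on the generator appearing in (\ref{bgsymlax}), and then to exploit the fact that the matrices satisfying the $J$-antisymmetry form a Lie subalgebra. Set $B := q_1\otimes Jq_2 - q_2\otimes Jq_1$, so that (\ref{bgsymlax}) reads $\pa_\al L_1 = [B_l, L_1]$ and $\pa_\al L_2 = -[B_u, L_2]$. Since $J^2 = I$ and $J^T = J$, the BTL constraint (\ref{bctlconstr}) is equivalent to $JL_i + L_i^T J = 0$. Writing $\mathcal{A}(X) := JX + X^T J$, the goal is to establish $\mathcal{A}(\pa_\al L_i) = 0$ for $i = 1, 2$.

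The first step is to observe that $\ker\mathcal{A}$ is closed under the matrix commutator. If $\mathcal{A}(X) = \mathcal{A}(Y) = 0$ then $X^T J = -JX$ and $Y^T J = -JY$, and a short substitution gives $J[X,Y] + [X,Y]^T J = (JXY - JYX) + (JYX - JXY) = 0$. Since the hypothesis of the proposition provides $\mathcal{A}(L_i) = 0$, it suffices to prove $\mathcal{A}(B_l) = 0$ and $\mathcal{A}(B_u) = 0$.

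Next I would check $\mathcal{A}(B) = 0$ directly from the sign pattern $J_{ij} = (-1)^i \delta_{i+j,0}$, using the elementary identities $J(q_a \otimes Jq_b) = Jq_a \otimes Jq_b$, $(q_a \otimes Jq_b)J = q_a \otimes q_b$, and $(q_a \otimes Jq_b)^T = Jq_b \otimes q_a$. Applied to $B$ these give $JBJ = Jq_1 \otimes q_2 - Jq_2 \otimes q_1$ and $B^T = Jq_2 \otimes q_1 - Jq_1 \otimes q_2$, whose sum is zero, so $B^T = -JBJ$ and equivalently $\mathcal{A}(B) = 0$.

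The final and only slightly delicate step is the triangular decomposition. Because $J$ is antidiagonal, $(JXJ)_{ij} = (-1)^{i+j} X_{-i,-j}$, so conjugation by $J$ interchanges strictly lower triangular matrices with strictly upper triangular matrices, and also interchanges the upper-including-diagonal part with the lower-including-diagonal part. Consequently the identity $B^T + JBJ = 0$ splits along triangular support: the strictly upper piece reads $B_l^T + JB_l J = 0$ and the lower-including-diagonal piece reads $B_u^T + JB_u J = 0$, which are exactly $\mathcal{A}(B_l) = 0$ and $\mathcal{A}(B_u) = 0$. Combined with the Lie-subalgebra closure this yields the proposition. I expect this triangular-compatibility check to be the main, mild obstacle; the remaining arguments are routine matrix algebra built on $J^2 = I$.
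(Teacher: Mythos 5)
Your proposal is correct and follows essentially the same route as the paper: both establish $JB+B^TJ=0$ for the generator $B=q_1\otimes Jq_2-q_2\otimes Jq_1$, pass to the triangular parts $B_l$ and $B_u$, and then conclude from the bracket structure (the paper writes out explicitly for $[B_l,L_1]$ the commutator computation that your ``kernel of $\mathcal{A}$ is a Lie subalgebra'' observation abstracts). The only substantive difference is that you prove the triangular-splitting step directly from $(JXJ)_{ij}=(-1)^{i+j}X_{-i,-j}$, whereas the paper cites it from Ueno--Takasaki.
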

\begin{proof}
Firstly,
\begin{eqnarray*}
&&J(q_1\otimes J q_2-q_2\otimes J q_1)+(q_1\otimes J q_2-q_2\otimes J q_1)^TJ\\
&=&J(q_1\otimes q_2-q_2\otimes q_1)J+J( q_2\otimes q_1-q_1\otimes
q_2)J=0,
\end{eqnarray*}
by noting that $q_1\otimes J q_2=(q_1\otimes q_2)J^T=(q_1\otimes
q_2)J$ and $(q_1\otimes q_2)^T=q_2\otimes q_1$. Thus
\begin{eqnarray}
&&J(q_1\otimes J q_2-q_2\otimes J q_1)_l+(q_1\otimes J q_2-q_2\otimes J q_1)_l^TJ=0,\label{orthrelation1}\\
&&J(q_1\otimes J q_2-q_2\otimes J q_1)_u+(q_1\otimes J
q_2-q_2\otimes J q_1)_u^TJ=0,\label{orthrelation2}
\end{eqnarray}
from the fact if $JA+A^TJ=0$, then $JA_l+A_l^TJ=0$ and
$JA_u+A_u^TJ=0$ (see \cite{uenotaksasai}).

Then for $i=1$, from (\ref{bctlconstr})(\ref{bgsymlax}),
\begin{eqnarray*}
&&(\pa_\alpha L_1^T)J+J(\pa_\alpha L_1)\\
&=&[(q_1\otimes J q_2-q_2\otimes J q_1)_l,L_1]^TJ+J[(q_1\otimes J q_2-q_2\otimes J q_1)_l,L_1]\\
&=&-[(q_1\otimes J q_2-q_2\otimes J q_1)_l^T,L_1^T]J+J[(q_1\otimes J q_2-q_2\otimes J q_1)_l,L_1]\\
&=&(q_1\otimes J q_2-q_2\otimes J q_1)_l^TJL_1-JL_1J^{-1}(q_1\otimes J q_2-q_2\otimes J q_1)_l^TJ+J[(q_1\otimes J q_2-q_2\otimes J q_1)_l,L_1]\\
&=&-J(q_1\otimes J q_2-q_2\otimes J q_1)_lL_1+JL_1(q_1\otimes J q_2-q_2\otimes J q_1)_l+J[(q_1\otimes J q_2-q_2\otimes J q_1)_l,L_1]\\
&=&-J[(q_1\otimes J q_2-q_2\otimes J q_1)_l,L_1]+J[(q_1\otimes J
q_2-q_2\otimes J q_1)_l,L_1]=0.
\end{eqnarray*}
The case $i=2$ can be similarly proved.
\end{proof}
Thus $\pa_\alpha$ is indeed well-defined. we next will show that
this squared eigenfunction flow is indeed a kind of symmetry for the
BTL hierarchy, and thus is called \textbf{the squared eigenfunction
symmetry}.
\begin{proposition}
\begin{eqnarray}
[\pa_\alpha,\pa_{x_{2n+1}}]=[\pa_\alpha,\pa_{y_{2n+1}}]=0.
\end{eqnarray}
\end{proposition}
\begin{proof}
In fact, according to (\ref{bctlhierarchy}), (\ref{wevolution}),
(\ref{bgsymw}) and (\ref{bgsymlax})
\begin{eqnarray*}
&&[\pa_\alpha,\pa_{x_{2n+1}}]W_1\\
&=&\pa_\alpha\Big((L_1^{2n+1})_uW_1\Big)-\pa_{x_{2n+1}}\Big((q_1\otimes J q_2-q_2\otimes Jq_1)_lW_1\Big)\\
&=&[(q_1\otimes J q_2-q_2\otimes Jq_1)_l,L_1^{2n+1}]_uW_1+(L_1^{2n+1})_u(q_1\otimes J q_2-q_2\otimes Jq_1)_lW_1\\
&&-((L_1^{2n+1})_uq_1\otimes J q_2)_lW_1+(q_1\otimes(L_1^{2n+1})_u^TJ q_2)_lW_1+((L_1^{2n+1})_uq_2\otimes J q_1)_lW_1\\
&&-(q_2\otimes (L_1^{2n+1})_u^TJ q_1)_lW_1-(q_1\otimes Jq_2-q_2\otimes Jq_1)_l(L_1^{2n+1})_uW_1\\
&=&[(q_1\otimes J q_2-q_2\otimes Jq_1)_l,L_1^{2n+1}]_uW_1+[(L_1^{2n+1})_u,(q_1\otimes J q_2-q_2\otimes Jq_1)_l]W_1\\
&&+[(q_1\otimes J q_2-q_2\otimes Jq_1),(L_1^{2n+1})_u]_lW_1\\
&=&[(q_1\otimes J q_2-q_2\otimes Jq_1)_l,(L_1^{2n+1})_u]_uW_1+[(L_1^{2n+1})_u,(q_1\otimes J q_2-q_2\otimes Jq_1)_l]W_1\\
&&+[(q_1\otimes J q_2-q_2\otimes Jq_1)_l,(L_1^{2n+1})_u]_lW_1\\
&=&[(q_1\otimes J q_2-q_2\otimes
Jq_1)_l,(L_1^{2n+1})_u]W_1+[(L_1^{2n+1})_u,(q_1\otimes J
q_2-q_2\otimes Jq_1)_l]W_1=0.
\end{eqnarray*}
Note that $q_1\otimes(L_1^{2n+1})_u^TJ q_2=q_1\otimes
Jq_2(L_1^{2n+1})_u$ is used in the third identity. While
$[A_u,B_u]_l=[A_l,B_l]_u=0$ is used in the fourth identity.

Similarly,
$[\pa_\alpha,\pa_{x_{2n+1}}]W_2=[\pa_\alpha,\pa_{y_{2n+1}}]W_1=[\pa_\alpha,\pa_{y_{2n+1}}]W_2=0$
can be proved.
\end{proof}
Define the following double expansions
\begin{eqnarray*}
Y_1(\lambda,\mu)&=&\sum_{m=0}^\infty
\frac{(\mu-\lambda)^m}{m!}\sum_{l=-\infty}^\infty\lambda^{-l-m-1}A_{1m,m+l}(M_1,L_1),\\
Y_2(\lambda,\mu)&=&\sum_{m=0}^\infty
\frac{(\mu-\lambda)^m}{m!}\sum_{l=-\infty}^\infty\lambda^{-l-m-1}A_{2m,m+l}(M_2,L_2),
\end{eqnarray*}
which can be viewed as the generator of the additional symmetries
for the BTL hierarchy. This double expansions can be related with
the wave functions in the following way \cite{cheng2011} by
considering (\ref{relationadjoint}),
\begin{lemma}\label{propaddisymmtodab5}
For the BTL hierarchy,
\begin{eqnarray}
{Y}_1(\lambda,\mu)&=&\lambda^{-1}({\Psi}_1(x,y;\mu)\otimes J{\Psi}_1(x,y;-\lambda)-{\Psi}_1(x,y;-\lambda)\otimes J{\Psi}_1(x,y;\mu)),\label{generatortodab1}\\
{Y}_2(\lambda,\mu) &=&\lambda^{-1}({\Psi}_2(x,y;\mu^{-1})\otimes
J{\Psi}_2(x,y;-\lambda^{-1})-{\Psi_2}(x,y;-\lambda^{-1})\otimes
J{\Psi}_2(x,y;\mu^{-1})).\label{generatortodab2}
\end{eqnarray}
\end{lemma}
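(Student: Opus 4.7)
The strategy is to split the additional-symmetry polynomial
\[
A_{1m,m+l}(M_1,L_1)=M_1^m L_1^{m+l}-(-1)^{m+l}L_1^{m+l-1}M_1^m L_1,
\]
so that $Y_1(\lambda,\mu)=Y_1^{(A)}-Y_1^{(B)}$. I plan to match $Y_1^{(A)}$ with $\lambda^{-1}\Psi_1(\mu)\otimes J\Psi_1(-\lambda)$ via a Toda-lattice-type generating-function identity, and $Y_1^{(B)}$ with $\lambda^{-1}\Psi_1(-\lambda)\otimes J\Psi_1(\mu)$ via a transpose identity that exploits the BTL involution $J$. Throughout I will use $J=J^T=J^{-1}$ and $J^2=I$.

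The first step is the TL-analog identity
\[
\sum_{m\geq 0}\frac{(\mu-\lambda)^m}{m!}\sum_{l\in\mathbb{Z}}\lambda^{-l-m-1}M_1^m L_1^{m+l}=\lambda^{-1}\Psi_1(\mu)\otimes\Psi_1^{*}(\lambda),\qquad(\star)
\]
which is the Toda-lattice result essentially proved in \cite{cheng2011,cheng2012}. Conjugating through the dressing operator $W_1$, $(\star)$ reduces to the elementary identity $(\varepsilon^{m}\Lambda^{m+l})_{kj}=k(k-1)\cdots(k-m+1)\,\delta_{j,k+l}$; the sum $\sum_l \lambda^{-l-m-1}\delta_{j,k+l}$ collapses to $\lambda^{k-j-m-1}$, and the remaining sum over $m$ is the binomial expansion of $(\mu/\lambda)^k$ in powers of $(\mu-\lambda)/\lambda$. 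Invoking the BTL relation (\ref{relationadjoint}) $\Psi_1^{*}(\lambda)=J\Psi_1(-\lambda)$ then identifies $Y_1^{(A)}$ with the first term of (\ref{generatortodab1}).

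The heart of the proof is the transpose identity
\[
L_1^{m+l-1}M_1^m L_1=(-1)^{m+l}\,J\bigl(M_1^m L_1^{m+l}\bigr)^{T} J^{-1}.
\]
To derive it, I would first establish $M_1^T=J(M_1-L_1^{-1})J^{-1}$ from (\ref{bcwconstraints}) in the form $W_1^T=JW_1^{-1}J^{-1}$ together with the elementary calculation $J\varepsilon J=\varepsilon^T-\Lambda=-\varepsilon^*$ (equivalently $J^{-1}\varepsilon^T J=\varepsilon-\Lambda^{-1}$), which uses only the definitions of $\varepsilon,\varepsilon^*$ and the involution properties of $J$. Combined with $L_1^T=-JL_1J^{-1}$ from (\ref{bctlconstr}) and the canonical commutation $[L_1,M_1]=1$ (whence $L_1^{-1}M_1L_1=M_1-L_1^{-1}$, and therefore $(M_1-L_1^{-1})^{m}=L_1^{-1}M_1^m L_1$), a straight expansion of $(L_1^{m+l-1}M_1^m L_1)^T=L_1^T(M_1^T)^m(L_1^T)^{m+l-1}$ collapses to $(-1)^{m+l}J M_1^m L_1^{m+l}J^{-1}$; transposing both sides gives the claimed identity.

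Substituting this identity into $Y_1^{(B)}$, the two factors of $(-1)^{m+l}$ cancel and the operation $J(\cdot)^T J^{-1}$ pulls outside the summation, so by $(\star)$ one has $Y_1^{(B)}=J\bigl[\lambda^{-1}\Psi_1(\mu)\otimes\Psi_1^{*}(\lambda)\bigr]^T J^{-1}$. Using $(v\otimes u)^T=u\otimes v$, together with $J(v\otimes u)J^{-1}=(Jv)\otimes(Ju)$ (valid because $J=J^T$), and $\Psi_1^{*}(\lambda)=J\Psi_1(-\lambda)$ with $J^2=I$, this simplifies to $\lambda^{-1}\Psi_1(-\lambda)\otimes J\Psi_1(\mu)$; the difference $Y_1^{(A)}-Y_1^{(B)}$ is then exactly (\ref{generatortodab1}). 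The second identity (\ref{generatortodab2}) is proved by the same argument applied to the triple $(L_2,M_2,\Psi_2)$: because $L_2\Psi_2(\mu^{-1})=\mu\Psi_2(\mu^{-1})$ and $M_2\Psi_2(\mu^{-1})=\partial_\mu\Psi_2(\mu^{-1})$, the adapted spectral arguments become $\mu^{-1}$ and $-\lambda^{-1}$, while a parallel computation using $J(\varepsilon^*)^T J=-\varepsilon^T$ again yields $M_2^T=J(M_2-L_2^{-1})J^{-1}$ and the identical transpose identity. The main obstacle is the derivation of $M_1^T$ and the resulting transpose identity, which requires careful bookkeeping of signs through the $J$-involution and the noncommutativity of $L_1,M_1$; everything else reduces to Taylor expansion and elementary manipulation of the involution.
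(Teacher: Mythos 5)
Your proposal is correct, and all of its key identities check out: the generating--function identity $(\star)$ follows exactly as you sketch from $(\varepsilon^m\Lambda^{m+l})_{kj}=k(k-1)\cdots(k-m+1)\delta_{j,k+l}$ and the binomial resummation; the computations $J\Lambda J=-\Lambda^{-1}$, $J\,\mathrm{diag}[f(s)]\,J=\mathrm{diag}[f(-s)]$ give $J^{-1}\varepsilon^T J=\varepsilon-\Lambda^{-1}$ and hence $M_1^T=J(M_1-L_1^{-1})J^{-1}$; combined with $L_1^T=-JL_1J^{-1}$ and $(M_1-L_1^{-1})^m=L_1^{-1}M_1^mL_1$ this yields the transpose identity $(L_1^{m+l-1}M_1^mL_1)^T=(-1)^{m+l}JM_1^mL_1^{m+l}J^{-1}$, and the two signs $(-1)^{m+l}$ cancel as you claim, producing the swapped dyad $\lambda^{-1}\Psi_1(-\lambda)\otimes J\Psi_1(\mu)$. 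The point worth flagging is that the paper does not actually prove this lemma: it is quoted from \cite{cheng2011} ``by considering (\ref{relationadjoint})'', together with a correction removing a spurious $(\ \cdot\ )_-$ from Propositions 7 and 13 of that reference. So your argument is not a restatement of anything in this paper but a self-contained derivation of the imported result, via the natural decomposition $Y_1=Y_1^{(A)}-Y_1^{(B)}$ into the TL-type piece and its $J$-transpose image. What your route buys is transparency about exactly where the second (swapped) term in (\ref{generatortodab1})--(\ref{generatortodab2}) comes from --- namely the second summand of the BTL polynomial $A_{ml}$ in (\ref{abtladdsymm}) --- and it also makes visible why no projection $(\ \cdot\ )_-$ should appear, which is precisely the erratum the authors note. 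The only cosmetic slip is the phrase ``the adapted spectral arguments become $\mu^{-1}$ and $-\lambda^{-1}$'': for $Y_2$ what actually changes is that $L_2=W_2\Lambda^{-1}W_2^{-1}$, so the inner sum $\sum_l\lambda^{-l-m-1}\Lambda^{-(m+l)}$ collapses to $\lambda^{-1}\chi^*(\lambda)\otimes\chi(\lambda)$ and the Taylor shift acts through $\varepsilon^*\chi^*(\nu)=\partial_\nu\chi^*(\nu)$; carried out this way the computation does land on (\ref{generatortodab2}), so the gap is one of wording, not substance.
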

In \cite{cheng2011}, there are some mistakes in the corresponding
results about $Y(\lambda,\mu)$ for BTL and CTL (see Proposition 7
and 13 in \cite{cheng2011}), and the correct ones should be without
``$(\quad)_-$".

We denote the squared eigenfunction symmetry generated by
$\Psi_1(x,y;\mu)$ and $\lambda^{-1}\Psi_1(x,y;-\lambda)$ as
$\pa_{\alpha_1}$, while the one generated by $-\Psi_2(x,y;\mu^{-1})$
and $\lambda^{-1}\Psi_2(x,y;-\lambda^{-1})$ as $\pa_{\alpha_2}$.
Then
\begin{eqnarray}
\pa_{\alpha_1} W_1&=&\lambda^{-1}({\Psi}_1(x,y;\mu)\otimes
J{\Psi}_1(x,y;-\lambda)-{\Psi}_1(x,y;-\lambda)\otimes
J{\Psi}_1(x,y;\mu))_l
W_1,\\
\pa_{\alpha_1} W_2&=&-\lambda^{-1}({\Psi}_1(x,y;\mu)\otimes
J{\Psi}_1(x,y;-\lambda)-{\Psi}_1(x,y;-\lambda)\otimes
J{\Psi}_1(x,y;\mu))_u W_2,
\end{eqnarray}
and
\begin{eqnarray}
\pa_{\alpha_2} W_1&=&-\lambda^{-1}({\Psi}_2(x,y;\mu^{-1})\otimes
J{\Psi}_2(x,y;-\lambda^{-1})-{\Psi_2}(x,y;-\lambda^{-1})\otimes
J{\Psi}_2(x,y;\mu^{-1}))_l
W_1,\\
\pa_{\alpha_2} W_2&=&\lambda^{-1}({\Psi}_2(x,y;\mu^{-1})\otimes
J{\Psi}_2(x,y;-\lambda^{-1})-{\Psi_2}(x,y;-\lambda^{-1})\otimes
J{\Psi}_2(x,y;\mu^{-1}))_u W_2.
\end{eqnarray}

Further from (\ref{bctladdsym}) (\ref{generatortodab1}) and
(\ref{generatortodab2}), we have

\begin{proposition}\label{btlsqsadd}
The squared eigenfunction symmetries $\pa_{\alpha_1}$ and
$\pa_{\alpha_2}$ are the generators of the additional symmetries for
the BTL hierarchy, that is,
\begin{eqnarray}
\pa_{\alpha_1}&=&\sum_{m=0}^\infty
\frac{(\mu-\lambda)^m}{m!}\sum_{k=-\infty}^\infty\lambda^{-k-m-1}\pa_{x_{m,m+k}^*},\label{sqaddbtl1}\\
\pa_{\alpha_2}&=&\sum_{m=0}^\infty
\frac{(\mu-\lambda)^m}{m!}\sum_{k=-\infty}^\infty\lambda^{-k-m-1}\pa_{y_{m,m+k}^*}.\label{sqaddbtl2}
\end{eqnarray}
\end{proposition}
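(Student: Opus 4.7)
The plan is to view both sides of (\ref{sqaddbtl1}) and (\ref{sqaddbtl2}) as formal double series in $\lambda$ and $\mu$, and identify them coefficient-by-coefficient via Lemma \ref{propaddisymmtodab5}. The right-hand sides, assembled from the defining relation (\ref{bctladdsym}), are series in the triangular projections of $A_{iml}(M_i,L_i)$, while the left-hand sides, by the definitions of $\pa_{\alpha_1}$ and $\pa_{\alpha_2}$ recorded just above the proposition, are triangular projections of the matrices $\lambda^{-1}(\Psi_i(\mu^{\pm1})\otimes J\Psi_i(-\lambda^{\pm1})-\Psi_i(-\lambda^{\pm1})\otimes J\Psi_i(\mu^{\pm1}))$. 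Lemma \ref{propaddisymmtodab5} identifies these two families of objects as the single generator $Y_i(\lambda,\mu)$, which is the bridge on which the entire argument rests.

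First I would unpack the $\mathscr{D}_-$-projection in (\ref{bctladdsym}). From $(P_1,P_2)_-=(P_{1l}-P_{2l},P_{2u}-P_{1u})$ one gets $(A_{1ml},0)_-=(A_{1ml,l},-A_{1ml,u})$ and $(0,A_{2ml})_-=(-A_{2ml,l},A_{2ml,u})$, so (\ref{bctladdsym}) unpacks to
\begin{equation*}
\pa_{x_{m,l}^*}W_1=-A_{1ml,l}W_1,\qquad \pa_{x_{m,l}^*}W_2=A_{1ml,u}W_2,
\end{equation*}
\begin{equation*}
\pa_{y_{m,l}^*}W_1=A_{2ml,l}W_1,\qquad \pa_{y_{m,l}^*}W_2=-A_{2ml,u}W_2.
\end{equation*}
The characteristic sign flip between the $l$-projection on $W_1$ and the $u$-projection on $W_2$ mirrors the structure of the squared eigenfunction flow (\ref{bgsymw}) and is what makes a clean matching possible.

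Second, I would substitute these expressions into the right-hand sides of (\ref{sqaddbtl1}) and (\ref{sqaddbtl2}), pull the linear triangular projections inside the double sums, and recognize the resulting series as $\pm(Y_i(\lambda,\mu))_l$ acting on $W_1$ and $\pm(Y_i(\lambda,\mu))_u$ acting on $W_2$, directly from the defining double expansion of $Y_i$. Replacing $Y_i(\lambda,\mu)$ by its wave-function form via Lemma \ref{propaddisymmtodab5} and comparing with the definitions of $\pa_{\alpha_i}W_1$ and $\pa_{\alpha_i}W_2$ stated in the text then closes the identification on both components $W_1$ and $W_2$.

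The main obstacle is entirely one of sign and projection bookkeeping rather than new analytic content. The $\mathscr{D}_-$-projection couples the two components through a sign flip, the definition of $\pa_{\alpha_2}$ carries an extra minus sign on $\Psi_2(\mu^{-1})$, the $\lambda^{-1}$ prefactor hidden in Lemma \ref{propaddisymmtodab5} is responsible for the $\lambda^{-k-m-1}$ (rather than $\lambda^{-k-m}$) appearing in (\ref{sqaddbtl1})--(\ref{sqaddbtl2}), and, as the authors themselves remark just above the proposition, Lemma \ref{propaddisymmtodab5} must be applied with $Y_i(\lambda,\mu)$ carrying no outer $(\cdot)_-$ projection. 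Once these conventions are aligned, the claimed identities follow by a term-by-term comparison of the two double expansions, with no further computation required.
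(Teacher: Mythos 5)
Your proposal follows exactly the paper's (essentially unwritten) argument: the paper justifies Proposition \ref{btlsqsadd} only by the phrase ``from (\ref{bctladdsym}), (\ref{generatortodab1}) and (\ref{generatortodab2})'', and what you describe --- unpacking the $\mathscr{D}_-$-projection in (\ref{bctladdsym}) into the component formulas $\pa_{x_{m,l}^*}W_1=-A_{1ml,l}W_1$, $\pa_{x_{m,l}^*}W_2=A_{1ml,u}W_2$ (and likewise for $\pa_{y_{m,l}^*}$), summing these into the triangular projections of $Y_i(\lambda,\mu)$, and matching against the wave-function form of $Y_i$ from Lemma \ref{propaddisymmtodab5} and the displayed actions of $\pa_{\alpha_1},\pa_{\alpha_2}$ on $W_1,W_2$ --- is precisely that identification carried out in detail. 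The one step you defer, the final sign bookkeeping hidden in your ``$\pm$'', is also the one point the paper leaves entirely implicit, so your attempt is a faithful and slightly more explicit rendering of the paper's own route rather than a different proof.
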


Thus we have establish the relation between the squared
eigenfunction symmetry and the additional symmetry.
\section{The Squared Eigenfunction Symmetry for the CTL Hierarchies}
In this section, the squared eigenfunction symmetry for the CTL
hierarchy will be given.

Similar to the case of the above section, given two eigenfunctions
$q_1$ and $q_2$, one can define \textbf{the squared eigenfunction
flow} of the CTL hierarchy by its actions on the wave operators,
\begin{equation}\label{cgsymw}
\pa_\alpha W_1=(q_1\otimes K q_2+q_2\otimes K q_1)_l W_1,\quad
\pa_\alpha W_2=-(q_1\otimes K q_2+q_2\otimes K q_1)_u W_2.
\end{equation}

 According to
(\ref{laxwaveexpression}), the action of the squared eigenfunction
flow on the Lax operator is
\begin{equation}\label{cgsymlax}
\pa_\alpha L_1=[(q_1\otimes K q_2+q_2\otimes K q_1)_l,L_1],\quad
\pa_\alpha L_2=-[(q_1\otimes K q_2+q_2\otimes K q_1)_u, L_2].
\end{equation}
The next proposition shows that the definitions above is
well-defined.
\begin{proposition}
$\pa_\alpha$ is consistent with the CTL constraint
(\ref{bctlconstr}), i.e. $(\pa_\alpha L_i^T)K+K(\pa_\alpha
L_i)=0,i=1,2.$
\end{proposition}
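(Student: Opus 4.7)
The plan is to mirror the BTL proof of the analogous proposition, with $J$ replaced by $K$ and with the minus sign in the ``ghost'' kernel replaced by a plus sign, i.e.\ working with $q_1\otimes Kq_2+q_2\otimes Kq_1$ rather than $q_1\otimes Jq_2-q_2\otimes Jq_1$. The crucial algebraic input that forces this change of sign is the antisymmetry $K^T=-K$, in contrast with $J^T=J$ used in the BTL case; this can be checked from $K=\La J$ and $J_{ij}=(-1)^i\de_{i+j,0}$ by a direct index computation.

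The first step is to establish the CTL analogue of (\ref{orthrelation1})--(\ref{orthrelation2}), namely
\be
K(q_1\otimes Kq_2+q_2\otimes Kq_1)_l+(q_1\otimes Kq_2+q_2\otimes Kq_1)_l^TK=0,
\ee
together with the corresponding $(\,\cdot\,)_u$ identity. For this, I would first prove the ungraded version by computing $q_1\otimes Kq_2=(q_1\otimes q_2)K^T=-(q_1\otimes q_2)K$ and $(q_1\otimes Kq_2)^T=K(q_2\otimes q_1)$, so that
\bean
&&K(q_1\otimes Kq_2+q_2\otimes Kq_1)+(q_1\otimes Kq_2+q_2\otimes Kq_1)^TK\\
&=&-K(q_1\otimes q_2)K-K(q_2\otimes q_1)K+K(q_2\otimes q_1)K+K(q_1\otimes q_2)K=0.
\eean
It is precisely the ``$+$'' combination together with $K^T=-K$ that makes the four terms cancel; a ``$-$'' combination would double them. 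Then, using the same lemma from \cite{uenotaksasai} invoked in the BTL proof (if $KA+A^TK=0$ then $KA_l+A_l^TK=0$ and $KA_u+A_u^TK=0$), I obtain the displayed identity and its upper counterpart.

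The second step is the actual verification of $(\pa_\al L_i^T)K+K(\pa_\al L_i)=0$. For $i=1$, I would substitute (\ref{cgsymlax}) and use the CTL constraint (\ref{bctlconstr}) in the form $L_1^T=-KL_1K^{-1}$ to rewrite
\bean
(\pa_\al L_1^T)K+K(\pa_\al L_1)
=-[(q_1\otimes Kq_2+q_2\otimes Kq_1)_l^T,L_1^T]K+K[(q_1\otimes Kq_2+q_2\otimes Kq_1)_l,L_1],
\eean
then use the just-proved antisymmetry to turn $(q_1\otimes Kq_2+q_2\otimes Kq_1)_l^TK$ into $-K(q_1\otimes Kq_2+q_2\otimes Kq_1)_l$ and $L_1^TK$ into $-KL_1$; the two resulting commutators of $K(q_1\otimes Kq_2+q_2\otimes Kq_1)_l$ with $L_1$ appear with opposite signs and cancel. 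The case $i=2$ is identical after interchanging $_l\leftrightarrow_u$ and using $L_2^T=-KL_2K^{-1}$.

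There is no real obstacle here; the work is entirely bookkeeping, and the only conceptual point is the $K^T=-K$ identity, which forces the symmetric ``$+$'' combination in the definition of the CTL squared eigenfunction flow (\ref{cgsymw}). Once this sign choice is in place, the BTL template transfers verbatim.
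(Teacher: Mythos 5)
Your proposal is correct and follows essentially the same route as the paper: the ungraded antisymmetry identity $K(q_1\otimes Kq_2+q_2\otimes Kq_1)+(q_1\otimes Kq_2+q_2\otimes Kq_1)^TK=0$ via $K^T=-K$, the Ueno--Takasaki lemma to pass to the $(\,\cdot\,)_l$ and $(\,\cdot\,)_u$ parts, and then the same commutator cancellation as in the BTL case. The paper simply abbreviates the final commutator step as ``similar to the BTL hierarchy,'' which you have written out explicitly and correctly.
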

\begin{proof}
\begin{eqnarray*}
&&K(q_1\otimes K q_2+q_2\otimes K q_1)+(q_1\otimes K q_2+q_2\otimes K q_1)^TK\\
&=&K(q_1\otimes q_2+q_2\otimes q_1)K^T+K( q_1\otimes q_2+q_2\otimes
q_1)K=0,
\end{eqnarray*}
by noting that $K^T=-K$. Then
\begin{eqnarray}
&&K(q_1\otimes K q_2+q_2\otimes K q_1)+(q_1\otimes K q_2+q_2\otimes K q_1)^TK=0,\label{sprelation1}\\
&&K(q_1\otimes K q_2+q_2\otimes K q_1)+(q_1\otimes K q_2+q_2\otimes
K q_1)^TK=0,\label{sprelation2}
\end{eqnarray}
from the fact if $KA+A^TK=0$, then $KA_l+A_l^TK=0$ and
$KA_u+A_u^TK=0$ (see \cite{uenotaksasai}).

The rest of the proof is similarly to the case of the BTL hierarchy.
\end{proof}
Thus $\pa_\alpha$ is indeed well-defined. By the same way as the BTL
case, one can get the following proposition, which shows that this
squared eigenfunction flow is indeed a kind of symmetry for the CTL
hierarchy, and thus is called \textbf{the squared eigenfunction
symmetry}.
\begin{proposition}
\begin{eqnarray}
[\pa_\alpha,\pa_{x_{2n+1}}]=[\pa_\alpha,\pa_{y_{2n+1}}]=0.
\end{eqnarray}
\end{proposition}

Define the generator of the additional symmetries for the CTL
hierarchy as the following double expansions
\begin{eqnarray*}
Y_1(\lambda,\mu)&=&\sum_{m=0}^\infty
\frac{(\mu-\lambda)^m}{m!}\sum_{l=-\infty}^\infty\lambda^{-l-m-1}A_{1m,m+l}(M_1,L_1),\\
Y_2(\lambda,\mu)&=&\sum_{m=0}^\infty
\frac{(\mu-\lambda)^m}{m!}\sum_{l=-\infty}^\infty\lambda^{-l-m-1}A_{2m,m+l}(M_2,L_2).
\end{eqnarray*}
This double expansions can be related with the wave functions in the
following way\cite{cheng2011} by considering
(\ref{relationadjoint}),
\begin{lemma}\label{propaddisymmtodac5}
For the CTL hierarchy,
\begin{eqnarray}
{Y}_1(\lambda,\mu)&=&{\Psi}_1(x,y;\mu)\otimes K{\Psi}_1(x,y;-\lambda)+{\Psi}_1(x,y;-\lambda)\otimes K {\Psi}_1(x,y;\mu),\label{generatortodac1}\\
{Y}_2(\lambda,\mu)&=&\lambda^{-2}{\Psi}_2(x,y;\mu^{-1})\otimes
K{\Psi}_2(x,y;-\lambda^{-1})+\mu^{-2}{\Psi}_2(x,y;-\lambda^{-1})\otimes
K{\Psi}_2(x,y;\mu^{-1}).\label{generatortodac2}
\end{eqnarray}
\end{lemma}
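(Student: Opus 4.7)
The plan is to derive Lemma \ref{propaddisymmtodac5} from the analogous Toda--lattice identity proved in \cite{cheng2012} by imposing the CTL reduction, closely paralleling the BTL calculation behind Lemma \ref{propaddisymmtodab5}. The TL case supplies an identity of the form
$$ Y_{TL,i}(\lambda,\mu)\;:=\;\sum_{m,l}\frac{(\mu-\lambda)^m}{m!}\lambda^{-l-m-1}M_i^m L_i^l \;=\; c_i(\lambda)\,\Psi_i(x,y;\mu_i)\otimes\Psi_i^*(x,y;\lambda_i),$$
where $(\mu_1,\lambda_1)=(\mu,\lambda)$ and $(\mu_2,\lambda_2)=(\mu^{-1},\lambda^{-1})$ (because $L_2\Psi_2(z^{-1})=z^{-1}\Psi_2(z^{-1})$), and $c_i(\lambda)$ is a scalar normalisation that, together with the variable-change Jacobian, will later account for the prefactors $1$, $\lambda^{-2}$, $\mu^{-2}$ on the right-hand sides of (\ref{generatortodac1})--(\ref{generatortodac2}). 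I would take this TL identity as the starting point.

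First I would use the CTL symbol $A_{ml}(M,L)=M^m L^l-(-1)^l L^l M^m$ to split $Y_i=Y_{TL,i}-\widetilde Y_i$, where
$$ \widetilde Y_i(\lambda,\mu)\;=\;\sum_{m,l}\frac{(\mu-\lambda)^m}{m!}(-1)^l\lambda^{-l-m-1}L_i^l M_i^m,$$
after the reindexing $l\mapsto l-m$ to match the exponent layout of the TL series. Next I would derive the transpose rules $L_i^T=-KL_iK^{-1}$ and $M_i^T=KM_iK^{-1}$ from the CTL wave-operator constraint (\ref{bcwconstraints}): the first is immediate from (\ref{bctlconstr}), and the second follows from $M_i=W_i\varepsilon_i W_i^{-1}$ after checking that conjugating $\varepsilon_i^T$ by $K$ returns $\varepsilon_i$ modulo a scalar that is killed by all the commutators of interest. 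Combining these rules yields $L_i^l M_i^m=(-1)^l K^{-1}(M_i^m L_i^l)^T K$, and hence
$$ \widetilde Y_i(\lambda,\mu)\;=\;K^{-1}\,Y_{TL,i}(\lambda,\mu)^T\,K.$$

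I would then apply the Kronecker-product identity $K^{-1}(A\otimes B)^T K=(K^{-1}B)\otimes(K^T A)=-(K^{-1}B)\otimes(KA)$ (using $K^T=-K$) together with the CTL adjointness relation $\Psi_i^*(\lambda_i)=\lambda_i K\Psi_i(-\lambda_i)$ from (\ref{relationadjoint}). A short bookkeeping step then converts $Y_{TL,i}$ into the first tensor term $\Psi_i(\mu_i)\otimes K\Psi_i(-\lambda_i)$ of (\ref{generatortodac1})--(\ref{generatortodac2}), while $-\widetilde Y_i$ yields the symmetric second term with the roles of $\mu_i$ and $-\lambda_i$ swapped. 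For $i=2$ the extra $\lambda^{-1}$ and $\mu^{-1}$ in the arguments are precisely what produces the $\lambda^{-2}$ and $\mu^{-2}$ prefactors: one power comes from the $\lambda_2$-factor in the adjointness relation, the other from the Jacobian of the variable change $\lambda\mapsto\lambda^{-1}$ built into $c_2(\lambda)$.

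The main obstacle will be establishing the transpose relation $M_i^T=KM_iK^{-1}$: although $L_i^T=-KL_iK^{-1}$ follows at once from (\ref{bctlconstr}), the analogue for $M_i$ requires either a direct computation of $K^{-1}\varepsilon_i^T K$ using $K=\Lambda J$, $\varepsilon=\mathrm{diag}[s]\Lambda^{-1}$ and $\varepsilon^*=-\varepsilon^T+\Lambda$, or a uniqueness argument based on $[L_i,M_i]=1$ together with compatibility with the CTL reduction. Once this is settled, the remaining steps are routine bookkeeping that parallels the BTL calculation and restores the ``$(\ )_-$''-free form that the authors have flagged as the correction of Proposition 13 of \cite{cheng2011}.
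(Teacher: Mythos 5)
Your overall strategy---split $Y_i$ into the TL generating series $Y_{TL,i}=\lambda^{-1}\Psi_i(\mu_i)\otimes\Psi_i^*(\lambda_i)$ plus a second series, then convert $\Psi_i^*$ into $K\Psi_i$ via (\ref{relationadjoint})---is sound, and it does reproduce (\ref{generatortodac1}). But the step you yourself flag as the main obstacle, the transpose rule $M_i^T=KM_iK^{-1}$, genuinely fails for $i=2$, and the failure is not ``a scalar killed by the commutators of interest.'' A direct computation with $K=\Lambda J$ and $\varepsilon^*=-\varepsilon^T+\Lambda=-\mathrm{diag}[s]\Lambda$ gives
$$K^{-1}(\varepsilon^*)^TK=\varepsilon^*-2\Lambda,\qquad\text{hence}\qquad M_2^T=K\bigl(M_2-2L_2^{-1}\bigr)K^{-1},$$
whereas $K^{-1}\varepsilon^TK=\varepsilon$, so only $M_1^T=KM_1K^{-1}$ holds on the nose. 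The correction $-2L_2^{-1}$ commutes with $L_2$ (so the relation $[L,M]=(1,1)$ cannot detect it), but it does not drop out of the products $L_2^{m+l}M_2^m$; consequently your identity $\widetilde Y_2=K^{-1}Y_{TL,2}^TK$ is false. Concretely, your route produces $\lambda^{-2}$ as the coefficient of \emph{both} terms in (\ref{generatortodac2}), i.e.\ it ``proves'' a formula that contradicts the lemma: the $\mu^{-2}$ on the second term is not a normalisation artifact you can absorb into $c_2$, since a single prefactor multiplies both terms in your scheme.

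The correct way to handle the second series is to undress it directly rather than by transposition. Writing $L_2^{m+l}M_2^m=W_2\Lambda^{-m-l}(\varepsilon^*)^mW_2^{-1}$ and resumming the Taylor series entrywise, one finds
$$\sum_{m\ge0}\frac{(\mu-\lambda)^m}{m!}\sum_{l\in\mathbb{Z}}(-1)^{m+l}\lambda^{-l-m-1}L_2^{m+l}M_2^m=\mu^{-1}\,\Psi_2(x,y;-\lambda^{-1})\otimes\Psi_2^*(x,y;-\mu^{-1}),$$
in which the expansion point has migrated from $\lambda$ to $\mu$---something a transpose, which never touches the spectral parameters, cannot produce. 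Applying $\Psi_2^*(-\mu^{-1})=-\mu^{-1}K\Psi_2(\mu^{-1})$ turns this into $-\mu^{-2}\Psi_2(-\lambda^{-1})\otimes K\Psi_2(\mu^{-1})$, and subtracting it from $Y_{TL,2}=\lambda^{-2}\Psi_2(\mu^{-1})\otimes K\Psi_2(-\lambda^{-1})$ recovers (\ref{generatortodac2}) exactly; the analogous direct computation for $i=1$ gives $\mu^{-1}\Psi_1(-\lambda)\otimes\Psi_1^*(-\mu)=-\Psi_1(-\lambda)\otimes K\Psi_1(\mu)$, consistent with your answer there. For what it is worth, the paper itself offers no proof of this lemma beyond citing \cite{cheng2011} together with (\ref{relationadjoint}), so the comparison here is with the computation that citation stands in for.
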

If denote the squared eigenfunction symmetry generated by
$\Psi_1(x,y;\mu)$ and $\Psi_1(x,y;-\lambda)$ as $\pa_{\alpha_1}$,
then
\begin{eqnarray}
\pa_{\alpha_1} W_1&=&({\Psi}_1(x,y;\mu)\otimes
K{\Psi}_1(x,y;-\lambda)+{\Psi}_1(x,y;-\lambda)\otimes K
{\Psi}_1(x,y;\mu))_l
W_1,\\
\pa_{\alpha_1} W_2&=&-({\Psi}_1(x,y;\mu)\otimes
K{\Psi}_1(x,y;-\lambda)+{\Psi}_1(x,y;-\lambda)\otimes K
{\Psi}_1(x,y;\mu))_u W_2,
\end{eqnarray}
And denote the squared eigenfunction symmetry generated by
$-\lambda^{-1}\Psi_2(x,y;\lambda^{-1})$ and
$\lambda^{-1}\Psi_2(x,y;-\lambda^{-1})$ as $\pa_{\alpha_2}$, that is
\begin{eqnarray}
\pa_{\alpha_2}W_1&=&-\lambda^{-2}({\Psi}_2(x,y;\mu^{-1})\otimes
K{\Psi}_2(x,y;-\lambda^{-1})+{\Psi_2}(x,y;-\lambda^{-1})\otimes
K{\Psi}_2(x,y;\mu^{-1}))_lW_1,\label{paalphactl21}\\
\pa_{\alpha_2} W_2&=&\lambda^{-2}({\Psi}_2(x,y;\mu^{-1})\otimes
K{\Psi}_2(x,y;-\lambda^{-1})+{\Psi_2}(x,y;-\lambda^{-1})\otimes
K{\Psi}_2(x,y;\mu^{-1}))_u W_2.\label{paalphactl22}
\end{eqnarray}
From (\ref{bctladdsym}), (\ref{generatortodac1}) and
(\ref{generatortodac2}), we have,
\begin{proposition}\label{ctlsqsadd}
The squared eigenfunction symmetries $\pa_{\alpha_1}$ are the
generators of the additional symmetries $\pa_{x_{ml}^*}$ for the CTL
hierarchy, that is,
\begin{eqnarray}
\pa_{\alpha_1}&=&\sum_{m=0}^\infty
\frac{(\mu-\lambda)^m}{m!}\sum_{k=-\infty}^\infty\lambda^{-k-m-1}\pa_{x_{m,m+k}^*}.\label{sqaddctl1}
\end{eqnarray}
And the relation between the squared eigenfunction symmetry
$\pa_{\alpha_2}$ and the additional symmetries $\pa_{y_{ml}^*}$ for
the CTL hierarchy is as follows:
\begin{eqnarray}
\pa_{\alpha_2}&=&\sum_{k=-\infty}^\infty\lambda^{-k-1}\pa_{y_{0,k}^*}.\label{sqaddctl2}
\end{eqnarray}
\end{proposition}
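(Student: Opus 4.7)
The plan is to verify both identities by comparing the action of each side on the pair of wave operators $W=(W_1,W_2)$: once agreement is established there, it propagates to $L$, $M$, and every derived object. From the decomposition $(P_1,P_2)_-=(P_{1l}-P_{2l},P_{2u}-P_{1u})$ applied to (\ref{bctladdsym}), each additional flow $\pa_{x_{m,l}^*}$ acts on $W_1$ as $-(A_{1ml})_l$ and on $W_2$ as $(A_{1ml})_u$, while $\pa_{y_{m,l}^*}$ acts on $W_1,W_2$ as $(A_{2ml})_l$ and $-(A_{2ml})_u$. On the other hand, (\ref{cgsymw}) together with Lemma \ref{propaddisymmtodac5} expresses the squared eigenfunction flows $\pa_{\alpha_1}$ and $\pa_{\alpha_2}$ as the lower/upper triangular parts of $Y_1(\lambda,\mu)$ (for $\pa_{\alpha_1}$) and a rescaling of $Y_2(\lambda,\mu)$ (for $\pa_{\alpha_2}$), acting on $W_1$ and $W_2$. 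The proof consists in matching these two descriptions coefficient by coefficient in the formal $(\lambda,\mu)$-expansion.

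For (\ref{sqaddctl1}) the match is direct. Using (\ref{generatortodac1}) I rewrite $\pa_{\alpha_1}W_i$ as (a sign times) $(Y_1(\lambda,\mu))_{l/u}W_i$. Substituting the defining series
\begin{equation*}
Y_1(\lambda,\mu)=\sum_{m=0}^{\infty}\frac{(\mu-\lambda)^m}{m!}\sum_{k=-\infty}^{\infty}\lambda^{-k-m-1}A_{1m,m+k}(M_1,L_1)
\end{equation*}
and noting that the triangular projections commute with these formal sums, the coefficient of $\frac{(\mu-\lambda)^m}{m!}\lambda^{-k-m-1}$ reduces to $(A_{1m,m+k})_{l/u}W_i$. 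Reading (\ref{bctladdsym}) at $(m,l)=(m,m+k)$ identifies this with $\pa_{x_{m,m+k}^*}W_i$, which gives (\ref{sqaddctl1}).

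The identity (\ref{sqaddctl2}) is more delicate because of the $\lambda,\mu$-asymmetry in (\ref{generatortodac2}): the two tensor terms of $Y_2(\lambda,\mu)$ carry the different prefactors $\lambda^{-2}$ and $\mu^{-2}$, whereas the bilinear in (\ref{paalphactl21})--(\ref{paalphactl22}) carries the common prefactor $\lambda^{-2}$ on both terms, because (\ref{cgsymw}) is symmetric in $q_1=-\lambda^{-1}\Psi_2(x,y;\mu^{-1})$ and $q_2=\lambda^{-1}\Psi_2(x,y;-\lambda^{-1})$. The two therefore coincide only on the diagonal $\mu=\lambda$. I would then specialize $Y_2(\lambda,\mu)$ to $\mu=\lambda$: every term with $m\geq1$ in the double expansion carries the vanishing factor $(\mu-\lambda)^m$, so only the $m=0$ slice $\sum_{k}\lambda^{-k-1}A_{20,k}(M_2,L_2)$ survives. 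Matching this against (\ref{bctladdsym}) for the $y^*$-flows identifies the surviving coefficients with $\pa_{y_{0,k}^*}$, giving (\ref{sqaddctl2}).

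The main obstacle is handling this collapse to the diagonal transparently and explaining why, in the CTL setting, no information about the $m\geq1$ additional flows can be captured by a single bilinear of the form (\ref{cgsymw}) built from $\Psi_2$'s. The root cause is the inherent $\lambda\leftrightarrow\mu$ asymmetry of the CTL generator (\ref{generatortodac2}) set against the intrinsic symmetry of the squared eigenfunction construction; this contrasts with the BTL formula (\ref{generatortodab2}), whose overall symmetric $\lambda^{-1}$ factor allows the entire double expansion to survive. A secondary, purely bookkeeping task is tracking the signs produced by $(P_1,P_2)_-=(P_{1l}-P_{2l},P_{2u}-P_{1u})$ so that the formal series coefficients in (\ref{sqaddctl1})--(\ref{sqaddctl2}) line up with the correct overall sign.
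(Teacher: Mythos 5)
Your proposal is correct and follows essentially the same route as the paper, which states this proposition without an explicit proof as a direct consequence of (\ref{bctladdsym}) and Lemma \ref{propaddisymmtodac5}: one compares the actions of both sides on $W=(W_1,W_2)$ term by term in the double expansion, using the splitting $(P_1,P_2)_-=(P_{1l}-P_{2l},P_{2u}-P_{1u})$. In particular, your explanation of why (\ref{sqaddctl2}) captures only the $m=0$ flows --- the $\lambda^{-2}$ versus $\mu^{-2}$ asymmetry of (\ref{generatortodac2}) forcing the collapse to the diagonal $\mu=\lambda$ --- is exactly the content of the paper's Remark following the proposition.
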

\noindent{\bf Remark:} Usually, one may think that the result about
the CTL hierarchy should be parallel to those about the BTL
hierarchy. But here (\ref{sqaddctl2}) is different from
(\ref{sqaddbtl2}). Because of the coefficients of
(\ref{generatortodac2}), it is difficult to construct the squared
eigenfunction symmetry corresponding to the generating function of
the additional symmetries $\pa_{y_{ml}^*}$. But when $\mu=\lambda$,
we can construct $\pa_{\alpha_2}$ as (\ref{paalphactl21}) and
(\ref{paalphactl22}). In this case, $\pa_{\alpha_2}$ can be viewed
as the generating function of $\pa_{y_{0,k}^*}$. In order to get the
parallel result to the BTL hierarchy, some modifications of
(\ref{cgsymw}) and (\ref{cgsymlax}) may be needed.

\section{Conclusions and Discussions}
The squared eigenfunction symmetries for the BTL and CTL hierarchies
are constructed explicitly (see (\ref{bgsymw}), (\ref{bgsymlax}),
(\ref{cgsymw}) and (\ref{cgsymlax})) in the suitable combination of
the ones of the TL hierarchy by considering the BTL and CTL
constraint. And the relation with the additional symmetry is also
investigated, that is, the squared eigenfunction symmetry can be
viewed as the generating function of the additional symmetries when
the defined eigenfunctions are the wave functions (see Proposition
\ref{btlsqsadd} and \ref{ctlsqsadd}). These theories are expected to
be applied in the study of the symmetry constraints for the BTL and
CTL hierarchies and the corresponding additional symmetries.

{\bf Acknowledgments}

{\noindent \small This work is supported by the NSFC (Grant No. 11226196) and ``the Fundamental Research Funds for the
Central Universities" No. 2012QNA45}

\end{document}